\newcommand{\modyseki}[1]{\textcolor[RGB]{151, 120, 241}{#1}}
\newtheorem{theorem}{Theorem}
\begin{document}

\preprint{APS/123-QED}

\title{
Improving the accuracy of the energy estimation by combining quantum annealing with
classical computation
}

\author{Takashi Imoto}
\author{Yuya Seki}%
\author{Yuichiro Matsuzaki}
 \email{matsuzaki.yuichiro@aist.go.jp}
\author{Shiro Kawabata}
 \email{s-kawabata@aist.go.jp}
\affiliation{%
 Research Center for Emerging Computing Technologies, National Institute of Advanced Industrial Science and Technology (AIST),
1-1-1 Umezono, Tsukuba, Ibaraki 305-8568, Japan.
}%

\date{\today}

\begin{abstract}
Quantum chemistry calculations are important applications of quantum annealing.
For practical applications in quantum chemistry,
it is essential to estimate a ground state energy of the Hamiltonian with chemical accuracy. 
However, there are no known methods to guarantee the accuracy of the estimation of the energy calculated by quantum annealing. 
Here, we propose a way to improve the accuracy of the estimate of the ground state energy
by combining quantum annealing with
classical computation.
 In our scheme,
 before running the QA, we need a pre-estimation of
 the energies of the ground state and first excited state with some error bars (corresponding to possible estimation error) by performing classical computation with some approximations. 
 We show that, if an expectation value and variance of the energy of the state after the QA 
 are smaller than 
 certain threshold values (that we can calculate from the pre-estimation), 
 the QA provides us with a better estimate of the ground state energy than that of the pre-estimation.
Since the expectation value and variance of the energy can be experimentally measurable by the QA, 
our results pave the way for accurate estimation of the ground state energy with the QA.

\end{abstract}

\maketitle


\section{\label{sec:level1}Introduction}

Recently, quantum chemistry calculations have attracted attention as new application for quantum devices
because of  
its potential use in medical areas. One of the main purposes of the quantum chemistry is to calculate the energy of the molecule Hamiltonian written by the second quantized form.
The high accuracy of the energy of chemical materials is required at least $1.6\times10^{-3}$ hartree where $1\mbox{hartee}=e^{2}/4\pi\epsilon_{0}a_{0}=27.211$eV and $a_{0}=1\mbox{bohr}=0.529\times10^{-10}\mbox{m}$.
This accuracy is called chemical accuracy.
The energy with a chemical accuracy
allows us to estimate
the chemical reaction rate at a room temperature 
using the Eyring equation\cite{eyring1935activated}

There are sophisticated techniques to map the molecule Hamiltonian 
with
the second quantized form into a spin Hamiltonian.
These techniques are important 
to implement the quantum chemistry calculations with the quantum devices composed of qubits, because the Hamiltonian to describe the molecules in the quantum devices should be written by the Pauli matrices.
We can
map the second-quantized many-body Hamiltonians onto those of qubits systems by Bravyi-Kitaev transformation\cite{bravyi2002fermionic, verstraete2005mapping, seeley2012bravyi, tranter2015b, xia2017electronic}.

There is
an improvement over the Jordan-Wigner transformation in the requirement for the number of the qubit operators per a fermionic operator.
The Jordan-Wigner transformation 
maps one of $n$ fermionic operators to $O(n)$ qubits operators.
On the other hand, Bravyi-Kitaev transformation maps one of  $n$ fermionic operators to $O(log(n))$ qubits.
The comparison of the gate number for the Bravyi-Kitaev tansformation and Jordan-Wigner transformation
to get the ground state and the lowest energy with the Trotter decomposition is reported\cite{tranter2018comparison}.
Also, Babbush \textit{et al}. represent the Hamiltonian using only 2-local interaction between spins\cite{babbush2014adiabatic}.

In fault tolerant quantum computation, quantum algorithms have been proposed in quantum chemistry calculations\cite{takeshita2020increasing, mueck2015quantum, babbush2018low}. 
Molecular energies are obtained using phase-estimation algorithms\cite{aspuru2005simulated, whitfield2011simulation}.
However, 
the fault tolerant quantum computer require
many 
qubits with high fidelity gate operations
beyond the capability of near-term quantum computer to use error-correction.
So the algorithm for quantum chemistry is not experimentally implemented with a practically useful size yet.

Recently, 
Noisy Intermediate-Scale Quantum(NISQ) computing is proposed \cite{arute2019quantum, zhang2020variational,endo2021hybrid}. 
One of the promising algorithms with NISQ is the variational quantum eigensolver(VQE) with the variational method\cite{peruzzo2014variational, mcclean2016theory}.
The VQE gives the lowest eigenvalue of a Hamiltonian such as that of a chemical material.
The VQE is a hybrid quantum-classical algorithm.
Variational algorithm is also used to simulate quantum dynamics
\cite{li2017efficient,chen2020demonstration}.
The energy variance was used to know how close the quantum state is to the energy eigenstate in the NISQ algorithm
\cite{kardashin2020certified}.

Quantum annealing(QA) is also a promising way to implement the quantum chemistry calculations. The QA was traditionally used 
to solve the combinatorial optimization problem\cite{kadowaki1998quantum, matsuzaki2020direct, seki2012quantum}. 
We map the combinatorial optimization problem 
into the Ising Hamiltonian $H_{P}$, and we call this a problem Hamiltonian whose ground state corresponds to the solution of the combinatorial optimization problem. 
On the other hand, we use another Hamiltonian $H_{D}$ that represents transverse magnetic fields, which we call a driver Hamiltonian.
In the QA, we prepare a ground state of $H_{D}$, and
the total time-dependent Hamiltonian is changed from $H_{D}$ to $H_{P}$ within an annealing time $T$.
As long as an adiabatic condition is satisfied, an adiabatic theorem guarantees that we can obtain the ground state of the problem Hamiltonian by the QA.
Importantly, by replacing the $H_P$ with the molecule Hamiltonian, 
the QA can be used to estimate an energy of the ground state in quantum chemistry\cite{xia2017electronic, copenhaver2020using, mazzola2017quantum, genin2019quantum, streif2019solving}.
 In addition, the excited states search in quantum chemistry is discussed \cite{teplukhin2019calculation, seki2020excited}.

D-wave systems, Inc\modyseki{.~}\cite{johnson2011quantum} have realized quantum annealing machines composed of thousands of the qubits. They use superconducting flux qubits to implement the quantum annealing.
There are many experimental demonstrations of the QA
by the device of the D-wave systems, Inc\modyseki{.~}\cite{kudo2018constrained, adachi2015application, hu2019quantum, kudo2020localization}. 
Especially, quantum chemistry calculations were demonstrated with the QA to estimate
the ground state energy for a small size molecule \cite{genin2019quantum}.
 
The potential problem to use the QA for practical quantum chemistry calculations is an intrinsic error during the QA. 
Non-adiabatic transitions induce a transition from the ground state to excited states. Also, decoherence due to the coupling with an environment
causes unwanted decay of the quantum states during the QA. 
Due to these problems, it is not clear whether we can achieve the chemical accuracy in quantum chemistry calculations by the QA. So it is essential to achieve a higher accuracy to estimate the ground state energy in the QA.


In this paper, we propose a way to estimate an energy of the target Hamiltonian with 
improved accuracy by combining quantum annealing with
classical computation.
We show that, if the population of the ground state is more than $1/2$ after the QA, an energy variance (that we can experimentally measure)
provides us with an upper bound of the estimation error. We also show a way to check whether the population of the ground state is more than $1/2$ after the QA or not by using classical computation. We need to know a possible range of the energies of the ground state and first excited state before the QA by performing classical computation with some approximation 
(such as a mean field technique).
We can calculate a certain threshold by using values from the pre-estimation, and if the energy estimated by the QA is smaller than the threshold, the population of the ground state is more than $1/2$ after the QA. 
Additionally, if the error bars (corresponding to the estimation error)
given by the pre-estimation is larger than an energy variance measured from the QA, 
we can use the energy variance as the improved error bars for the energy estimation.
The method are represented schematically in FIG\ref{fig:epsart}.

The paper is structured as follow.
In Sec. II, we review the QA.
In Sec III, we derive a relationship between the energy estimation error 
and the energy variance in the QA for the ground state search, and also show a condition when 
the energy variance becomes an upper bound of the energy estimation error.
In Sec IV, 
to check the performance of our scheme, we adopt our method to estimate a ground state energy of the hydrogen molecule.
In Sec V, we summarize and discuss our results.

\begin{figure}[ht]
\includegraphics[width=80mm]{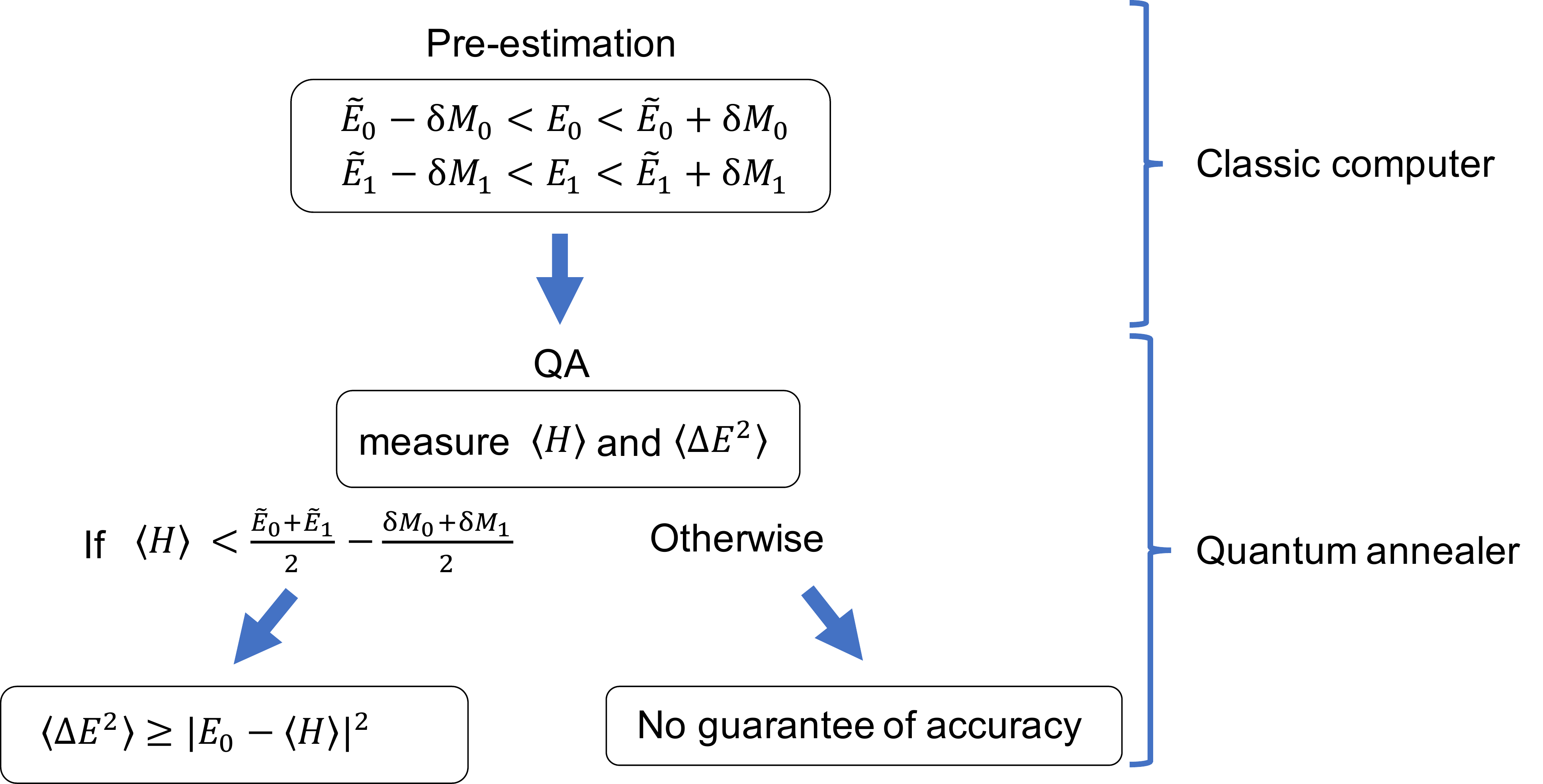}
\caption{A flow chart showing how to estimate the ground state energy of target Hamiltonian in our protocol.
We need to pre-estimate the ground state energy by using a classical computer with some approximation, and need to know $\tilde{E}_{0}$, $\tilde{E}_{1}$, $\delta M_0$, and $\delta M_{1}$ where
$\tilde{E}_{0}$ ($\tilde{E}_{1}$) is the approximated ground (first excited) state energy from pre-estimation and
$\delta M_0$ ($\delta M_{1}$)  is the bound of the error of the pre-estimation.
In addtion, $E_0$($E_1$) denotes the true energy of the ground (first excited) state and 
$\braket{H}$ ($\braket{\Delta E^{2}}$)
denotes the expectation (variance)
of the Hamiltonian of the state after the QA.
In our protocol, when $\braket{H}$ is smaller than $(\frac{\tilde{E}_0-\tilde{E}_0}{2}-\frac{\delta M_0 +\delta M_1}{2})$, the energy variance of the state after the QA can be an upper bound of the estimation error.
}

\label{fig:epsart}
\end{figure}

\section{Quantum annealing}

Let us review the QA for the ground state search.
We also regard the driving Hamiltonian as the transverse field.
The total Hamiltonian for the QA is described as follows
\begin{align}
    H(t)=\frac{t}{T}H_{P}+\Bigl(1-\frac{t}{T}\Bigr)H_{D}
\end{align}
where $T$ is the annealing time.
First, we prepare the ground state of the transverse field $H_{D}=-\sum_{i=1}^{N}\hat{\sigma}_{i}^{x}$, $\ket{\Psi(0)}=\ket{+\cdots+}$ where the quantum state $\ket{+}$ expresses the eigenstate of $\sigma^{x}$ with the eigenvalue $+1$.
Second, the driver Hamiltonian is adiabatically changed into 
the problem Hamiltonian.
Finally, we obtain the ground state of the problem Hamiltonian if the dynamics is adiabatic, and so the measurements of an observable $H_{P}$ provides the ground state energy.

Various noise deteriorates the accuracy of the QA. The main noise sources are environmental decoherence and non-adiabatic transitions. 
There is a trade-off between these two errors.
We should implement the QA slowly to avoid the non-adiabatic transitions, while the slower dynamics tend to increase the error due to the decoherence.

There are many attempts to suppress non-adiabatic transitions and decoherence.
The use of non-stochastic Hamiltonians has been proposed to increase an energy gap during the QA for a specific model, which could contribute the suppression of the non-adiabatic transitions. Inhomogeneous driving Hamiltonian for a p-spin model is known to
contribute to the speedup of the QA for specific cases
\cite{susa2018exponential, susa2018quantum}.
Direct estimation of the energy gap between the ground state and the first excited state using the quantum annealing was proposed, and this was shown to be more robust against non-adiabatic transitions than the conventional scheme \cite{matsuzaki2020direct}. 
Both theoretical and experimental studies have been made to suppress the decoherence during the QA.
We can use
error correction techniques \cite{pudenz2014error},
spin lock techniques \cite{chen2011experimental,nakahara2013lectures,matsuzaki2020quantum}, and decoherence free subspace 
\cite{matsuzaki2020quantum} for the suppresion of the decoherence. 
A method using non-adiabatic transition and quench for an efficient QA
is also investigated \cite{crosson2014different, goto2020excited, hormozi2017nonstoquastic, muthukrishnan2016tunneling, brady2017necessary, somma2012quantum, das2008colloquium}. 
Despite the efforts of previous research, there are no universal ways to suppress both environmental decoherence and non-adiabatic transitions during the QA, and this fact makes it difficult to guarantee the accuracy of the results of the QA.

\section{method}

Here, we present our scheme to estimate a ground state energy with 
improved accuracy in a certain condition.

\subsection{Bounds on the error of the energy}
\label{sec:bounds_on_the_error}

In practice, the quantum state become mixed states, because non-diagonal terms in the density matrix decay from the decoherence.
After implementing the QA, we measure only Hamiltonian and energy variance. 
In this case, we can show that the non-diagonal terms in the energy basis does not affect the expectation values. So we can describe the quantum state after the QA either a pure state or a mixed state as long as the energy population is the same between them. For simplicity, we would use a pure state for the description.


Suppose that we obtain a state of $\ket{\phi_{0}^{(\mbox{ann})}}$ after the QA.
We rewrite this state as follows.
\begin{align}
\ket{\phi_{0}^{(\mbox{ann})}}=\sqrt{1-\epsilon^{2}}\ket{\phi_{0}}+\sum_{m\neq 0}\epsilon_{m}\ket{\phi_{m}}.
\end{align}
where $\ket{\phi_{0}}$ denotes the ground state,  $\ket{\phi_{m}}$ 
$(m>0)$ denotes the $m$-th excited state, $\epsilon_{m}$ denotes the amplitude of the $m$-th excited state, and $\epsilon$ denotes the amplitude of all the states except the ground state.
In other words, $\sqrt{1-\epsilon^{2}}$ denotes the amplitude of the ground state. Due to the normalization, we have a condition of $\epsilon^{2}=\sum_{m\neq 0}\epsilon_{m}^{2}$.
Since we consider an expectation value of the Hamiltonian and the energy variance, the relative phase between the energy eigenstate does not affect our results. So
we can assume $\epsilon_{m}$ to be real values without loss of generality
throughout our paper. 

First, let us explain the estimation error and energy variance.
The estimation error of the energy eigenvalue of the problem Hamiltonian is given by

\begin{align}
    \bra{\phi_{0}^{(\mbox{ann})}}H_{P}\ket{\phi_{0}^{(\mbox{ann})}}-\bra{\phi_{0}}H_{P}\ket{\phi_{0}}=\sum_{m\neq 0}\epsilon_{m}^{2}(E_{m}-E_{0})
\end{align}
where $H_{P}$ is the problem Hamiltonian and $E_{m}$ is the $m$-th energy eigenvalue of the problem Hamiltonian.
On the other hand, the energy variance
$\Delta E^{2}$ is given by
\begin{align}
\Delta E^{2}&=\bra{\phi_{0}^{(\mbox{ann})}}H_{P}^{2}\ket{\phi_{0}^{(\mbox{ann})}} - \bra{\phi_{0}^{(\mbox{ann})}}H_{P}\ket{\phi_{0}^{(\mbox{ann})}}^{2}\notag\\
&=\sum_{m\neq 0}\epsilon_{m}^{2}(E_{m}-E_{0})^{2}-\Bigl(\sum_{m\neq 0}\epsilon_{m}^{2}(E_{m}-E_{0})\Bigr)^{2}.
\end{align}
We subtract the energy dispersion $\Delta E^{2}$ from the error squared of the energy as follows.

\begin{align}
    \Delta E^{2}&-\Bigl(\bra{\phi_{0}^{(\mbox{ann})}}H_{P}\ket{\phi_{0}^{(\mbox{ann})}}-\bra{\phi_{0}}H_{P}\ket{\phi_{0}}\Bigr)^{2}\notag\\
    &=\sum_{m\neq 0}\epsilon_{m}^{2}(E_{m}-E_{0})^{2}-\Bigl(\sum_{m\neq 0}\epsilon_{m}^{2}(E_{m}-E_{0})\Bigr)^{2}\label{eq:diff_disparsion_square_error}
\end{align}

We derive the next theorem. 
\begin{theorem}\label{theorem1}
If the amplitude of all the states except the ground state $\epsilon$ satisfies $\epsilon^{2}\leq\frac{1}{2}$, then 
\begin{align}
    \Delta E^{2}&-\Bigl(\bra{\phi_{0}^{(\mbox{ann})}}H_{P}\ket{\phi_{0}^{(\mbox{ann})}}-\bra{\phi_{0}}H_{P}\ket{\phi_{0}}\Bigr)^{2}\geq0.
\end{align}

\end{theorem}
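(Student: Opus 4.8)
The plan is to push everything through the two moments of the excitation gaps,
\begin{align}
A \equiv \sum_{m\neq 0}\epsilon_{m}^{2}(E_{m}-E_{0}),\qquad
B \equiv \sum_{m\neq 0}\epsilon_{m}^{2}(E_{m}-E_{0})^{2}.
\end{align}
The estimation error computed above is exactly $A$, and the variance identity gives $\Delta E^{2}=B-A^{2}$. Substituting these into the quantity appearing in the theorem, a direct expansion yields
\begin{align}
&\Delta E^{2}-\Bigl(\bra{\phi_{0}^{(\mbox{ann})}}H_{P}\ket{\phi_{0}^{(\mbox{ann})}}-\bra{\phi_{0}}H_{P}\ket{\phi_{0}}\Bigr)^{2}\notag\\
&\qquad=(B-A^{2})-A^{2}=B-2A^{2},
\end{align}
so the entire statement reduces to the single inequality $B-2A^{2}\geq 0$ under the hypothesis $\epsilon^{2}\leq\tfrac{1}{2}$. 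Note $B\geq 0$ automatically as a sum of nonnegative terms, so only the competition between $B$ and the factor $2A^{2}$ is at issue.

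The key step is to bound $A^{2}$ by $B$ with the Cauchy--Schwarz inequality using the nonnegative weights $\epsilon_{m}^{2}$. Writing $\epsilon_{m}^{2}(E_{m}-E_{0})=\sqrt{\epsilon_{m}^{2}}\,\bigl[\sqrt{\epsilon_{m}^{2}}(E_{m}-E_{0})\bigr]$ and applying Cauchy--Schwarz to the two factors, I would obtain
\begin{align}
A^{2}&=\Bigl(\sum_{m\neq 0}\epsilon_{m}^{2}(E_{m}-E_{0})\Bigr)^{2}\notag\\
&\leq\Bigl(\sum_{m\neq 0}\epsilon_{m}^{2}\Bigr)\Bigl(\sum_{m\neq 0}\epsilon_{m}^{2}(E_{m}-E_{0})^{2}\Bigr)=\epsilon^{2}B,
\end{align}
where the last equality uses the normalization $\sum_{m\neq 0}\epsilon_{m}^{2}=\epsilon^{2}$. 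The hypothesis $\epsilon^{2}\leq\tfrac{1}{2}$ then upgrades this to $A^{2}\leq\tfrac{1}{2}B$, that is, $2A^{2}\leq B$, which is precisely $B-2A^{2}\geq 0$ and closes the argument.

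The main obstacle is not any hard estimate but recognizing that the threshold $\tfrac{1}{2}$ must be matched against the factor of two in $B-2A^{2}$, and that Cauchy--Schwarz produces exactly the prefactor $\epsilon^{2}=\sum_{m\neq 0}\epsilon_{m}^{2}$ (the total population outside the ground state) to make the two coincide. This also shows the bound is sharp: for a single populated excited state at gap $\Delta$ one has $A=\epsilon^{2}\Delta$ and $B=\epsilon^{2}\Delta^{2}$, so $B-2A^{2}=\epsilon^{2}\Delta^{2}(1-2\epsilon^{2})$, which turns negative as soon as $\epsilon^{2}>\tfrac{1}{2}$. Hence a looser constant would fail, and the crux is simply choosing this particular weighting in Cauchy--Schwarz rather than grinding through the multi-term sum directly.
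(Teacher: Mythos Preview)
Your argument is correct and is essentially the paper's own proof: both reduce the claim to $B-2A^{2}\geq 0$ and then bound $A^{2}\leq\epsilon^{2}B$ via Cauchy--Schwarz with weights $\epsilon_{m}^{2}$, so that $\epsilon^{2}\leq\tfrac{1}{2}$ closes the inequality. Your write-up is in fact cleaner than the paper's, which carries a typo in its displayed intermediate identity (a missing factor of $2$) and mixes the indices $0$ and $n$; your additional sharpness remark for a single excited level is a nice bonus not present in the paper.
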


\begin{proof}
we consider the relation between the energy variance and the estiamtion error of the energy eigenvalue. 
We remark that the following inequality is hold from the Cauchy–Schwarz inequality.
\begin{align}
    \Bigl(\sum_{m\neq n}\epsilon_{m}^{2}&(E_{m}-E_{n})\Bigr)^{2} \notag\\ 
    &\leq \Bigl(\sum_{m\neq n}\epsilon_{m}^{2}\Bigr)\Bigl(\sum_{m\neq n}\epsilon_{m}^{2}(E_{m}-E_{n})^{2}\Bigr).\label{eq:deform_cauchy_schwarz}
\end{align}
The lower bound of difference between the energy variance and the error of the square of energy (\ref{eq:diff_disparsion_square_error}) is given as follows.
\begin{align}
    \Delta E^{2}&-\Bigl(\bra{\phi_{0}^{(\mbox{ann})}}H_{P}\ket{\phi_{0}^{(\mbox{ann})}}-\bra{\phi_{0}}H_{P}\ket{\phi_{0}}\Bigr)^{2}\notag\\
    &=\sum_{m\neq 0}\epsilon_{m}^{2}(E_{m}-E_{0})^{2}-\Bigl(\sum_{m\neq 0}\epsilon_{m}^{2}(E_{m}-E_{0})\Bigr)^{2}\notag\\
    &\geq\sum_{m\neq n}\epsilon_{m}^{2}(E_{m}-E_{n})^{2}-2\epsilon^{2}\sum_{m\neq n}\epsilon_{m}^{2}(E_{m}-E_{n})^{2}\notag\\
    &=(1-2\epsilon^{2})\sum_{m\neq n}\epsilon_{m}^{2}(E_{m}-E_{n})^{2}
\end{align}
where we have applied the inequality (\ref{eq:deform_cauchy_schwarz}) to rewrite the inequality.
Finally, from $\sum_{m\neq n}\epsilon_{m}^{2}(E_{m}-E_{n})^{2}\geq0$ in the equations above, this completes the proof. 
\end{proof}

From the theorem\ref{theorem1}, we derive a condition for the energy variance to be an upper bound of the estimation error as follows.
\begin{align}
    \epsilon^{2}\leq\frac{1}{2} \label{eq:condition}.
\end{align}

\subsection{Pre-estimation of the energy before the QA by performing classical computation}\label{sec:check_pop}

In this subsection, we show a way to check $\frac{1}{2}\geq \epsilon^{2}$, i.e. 
the population of the ground state to be more than $1/2$. The main idea is to perform a pre-estimation of the energies of the ground state and first excited state. We could use a classical computer for such a pre-estimation by using a suitable approximation such as mean field technique or variational methods.
If this pre-estimation is accurate enough, the condition of $\frac{1}{2}\geq \epsilon^{2}$ is satisfied, and so we can use the energy variance to obtain an upper bound of the error estimation, as we will explain later.
Let $\tilde{E}_{n}$ denote the approximate value of obtained from the pre-estimation.

There are many ways to calculate the ground energy and excited state energy of molcules in quantum chemistry with a classical computer. For example, variational trial function gives us an upperbound of the ground state energy. There is a way to estimate the lower boud of the ground state energy as shown \cite{temple1928theory}. Also,various ways to obtain an energy gap between the ground state energy and exited states is known\cite{loos2020quest, nakatsuji2005deepening, cai2000low, silva2008benchmarks}. The combination of these technique would provide the range of the ground state energy and first excited state energy.

The estimation error of the approximate eigenvalue $\tilde{E}_{n}$ from the true eigenvalue $E_{n}$ is denoted by $\delta\tilde{E}_{n}$.
Thus, we remark the equality
\begin{align}
    \tilde{E}_{n}=E_{n}+\delta\tilde{E}_{n}\label{eq:tilde_energy}.
\end{align}
We assume that the estimation errors are bounded as follows.
\begin{align}
    |\delta\tilde{E}_{0}|<\delta M_{0},\ \   |\delta\tilde{E}_{1}|<\delta M_{1}.
\end{align}
where $\delta M_{0}$ and $\delta M_{1}$ denotes error bars
representing the accuracy of the pre-estimation.
We can show
that the sufficient condition of inequality $\frac{1}{2}\geq \epsilon^{2}$
is 
\begin{align}
    E_{0}^{(\mbox{ann})}\leq\frac{1}{2}(E_{0}+E_{1}).\label{eq:gs_condition}
\end{align}
Substituting the equality (\ref{eq:tilde_energy}) into the inequality (\ref{eq:gs_condition}), we obtain 
\begin{align}
    (E_{0}<)E_{0}^{(\mbox{ann})}<\frac{1}{2}(\tilde{E}_{0}+\tilde{E}_{1})-\frac{1}{2}(\delta\tilde{E}_{0}+\delta\tilde{E}_{1}) \label{eq:inequality_1}
\end{align}
We obtain a sufficient condition for the inequality (\ref{eq:inequality_1}) as follows.
\begin{align}
    (E_{0}<)E_{0}^{(\mbox{ann})}<\frac{1}{2}(\tilde{E}_{0}+\tilde{E}_{1})-\frac{1}{2}(|\delta\tilde{E}_{0}|+|\delta\tilde{E}_{1}|)\label{eq:inequality_2}
\end{align}
From $|\delta\tilde{E}_{0}|<\delta M_{0}$ and $|\delta\tilde{E}_{1}|<\delta M_{1}$, a sufficient condition for the inequality (\ref{eq:inequality_2}) is
\begin{align}
    (E_{0}<)E_{0}^{(\mbox{ann})}<\frac{1}{2}(\tilde{E}_{0}+\tilde{E}_{1})-\frac{1}{2}(\delta M_{0}+\delta M_{1}).\label{eq:condition2}
\end{align}
From the approximate energy eigenvalues by the pre-estiamtion ($\tilde{E}_{0}$ and $\tilde{E}_{1}$)
and the upper bound of the estimation errors 
($\delta M_{0}$ and $\delta M_{1}$), 
the inequality (\ref{eq:condition2}) is the sufficient condition of (\ref{eq:condition}).
This means that,
as long as
(\ref{eq:condition2}) is satisfied, we can use the energy variance as the new error bar (corresponding to the upper-bound of the estimation error) of the energy estimation.
Especially when the new error bar given by the energy variance is smaller than $\delta M_0$, the accuracy to estimate the ground state energy is better than that from just the pre-estimation.

The condition (\ref{eq:condition2})
is not always satisfied. If there are significant effect of decoherence and/or non-adiabatic transitions, $E_{0}^{(\mbox{ann})}$ could be large so that the sufficient conditions would not be satisfied. Alternatively, if the estimation error $(\delta M_{0}+\delta M_{1})$ is large, again, it becomes harder to satisfy the sufficient condition. 
In these cases, we should try other approaches such as optimizing the quantum annealing schedule, fabricating new samples with lower decoherence, or more precise pre-estimation with a longer calculation time using a classical computer to make the condition (\ref{eq:condition2}) satisfied.


\subsection{Measurement of the energy and variance of the Hamiltonian}
We describe how to measure the energy and variance of the Hamiltonian in the QA. We assume that we can perform any single qubit measurements in the QA. The Hamiltonian is now written in the form of $H=\sum _j \hat{P}_j$ where $\hat{P}_j$ denotes the product of Pauri matrices (such as $\hat{\sigma}^{z}_{0}$, $\hat{\sigma}^{z}_{1}$, $\dotsc$, and 
$\hat{\sigma}^{x}_{0}\hat{\sigma}^{x}_{1}\hat{\sigma}^{y}_{2}\hat{\sigma}^{y}_{3}$). After the preparation of the ground state with the QA, we can implement single qubit measurements to obtain $\langle \hat{P}_1 \rangle $. This means that, by repeating the experiments (that are composed of the ground state preparation and single qubit measurements), we can measure $\langle \hat{P}_j \rangle $ for every $j$, and we obtain $\langle H\rangle $ by summing up them. Similarly, we can measure $\langle H^2\rangle $, and so we can also measure the variance of the energy.
These techniques are used in the algorithm in NISQ devices\cite{cerezo2020variational, endo2020hybrid, cerezo2020variational, endo2020hybrid}.

\section{Numerical Result}
In this section, we perform the numerical simulations to estimate the error of the energy using our method.
We consider the hydrogen molecule.
The Hamiltonian of the hydrogen molecule can be described by the Pauli matrices.
To consider the decoherence,
we simulate the QA with the Lindblad master equation, and discuss
the relation between decoherence rate and the accuracy of the energy estimation. In addition, we plot the improved error bars obtained from our methods.

We introduce the Lindblad master equation.
We consider the time dependent system Hamiltonian $H(t)$ under a noisy environment.
The Lindblad master equation which we use in this paper is given by
\begin{align}
    \frac{d\rho(t)}{dt}=-i[H(t), \rho(t)]+\sum_{n}\gamma[\sigma^{(k)}_{n}\rho(t)\sigma^{(k)}_{n}-\rho(t)]
\end{align}
where $\sigma^{(k)}_{j}(k=x,y,z)$ denotes the Pauli matrix acting at site $j$, $\gamma$ denotes a decoherence rate and $\rho(t)$ is a density matrix of the quantum state at time $t$.
We solve the Lindblad master equation numerically with
the QuTiP~\cite{johansson184nation, johansson2012qutip}.
Throughout of this paper, we choose the decoherence type $\sigma^{z}_j$ as the Lindblad operator.
This type of noise has been studied in a previous work to consider the effect of noise on the superconducting qubits \cite{puri2017quantum}.

The Hamiltonian of hydrogen is given by 
\begin{align}
    H= 
    &h_{0}I+h_{1}\hat{\sigma}^{z}_{0}
    +h_{2}\hat{\sigma}^{z}_{1}
    +h_{3}\hat{\sigma}^{z}_{2}
    +h_{4}\hat{\sigma}^{z}_{3}\notag\\
    &+h_{5}\hat{\sigma}^{z}_{0}\hat{\sigma}^{z}_{1}
    +h_{6}\hat{\sigma}^{z}_{0}\hat{\sigma}^{z}_{2}
    +h_{7}\hat{\sigma}^{z}_{1}\hat{\sigma}^{z}_{2}
    +h_{8}\hat{\sigma}^{z}_{0}\hat{\sigma}^{z}_{3}
    +h_{9}\hat{\sigma}^{z}_{1}\hat{\sigma}^{z}_{3}\notag\\
    &+h_{10}\hat{\sigma}^{z}_{2}\hat{\sigma}^{z}_{3}
    +h_{11}\hat{\sigma}^{y}_{0}\hat{\sigma}^{y}_{1}\hat{\sigma}^{x}_{2}\hat{\sigma}^{x}_{3}
    +h_{12}\hat{\sigma}^{x}_{0}\hat{\sigma}^{y}_{1}\hat{\sigma}^{y}_{2}\hat{\sigma}^{x}_{3}\notag\\
    &+h_{13}\hat{\sigma}^{y}_{0}\hat{\sigma}^{x}_{1}\hat{\sigma}^{x}_{2}\hat{\sigma}^{y}_{3}
    +h_{14}\hat{\sigma}^{x}_{0}\hat{\sigma}^{x}_{1}\hat{\sigma}^{y}_{2}\hat{\sigma}^{y}_{3}\label{eq:hydrogen_hamiltonian}
\end{align}
where we used STO-3G basis and Jordan-Wigner transformation.
The coefficients of the Hamiltonian (\ref{eq:hydrogen_hamiltonian}) $h_{0},\dotsc ,h_{14}$ depend on the interatomic distance.
We consider the interatomic distance is $0.65$\AA. 
The coefficient in the Hamiltonian (\ref{eq:hydrogen_hamiltonian}) corresponding to the above interatomic distance is written in the TABLE \ref{tb:hydrogen_coefficient}, which is calculated by OpenFermion~\cite{mcardle2020quantum}.

\begin{table}[htb]
    \begin{tabular}{|c||c|}
    \hline
      $h_{0}$ & $0.03775110394645716$ \\
      $h_{1}$ & $0.18601648886230573$ \\
      $h_{2}$ & $0.18601648886230576$ \\
      $h_{3}$ & $-0.2694169314163209$ \\
      $h_{4}$ & $-0.2694169314163209$ \\
      $h_{5}$ & $0.1729761013074511$ \\
      $h_{6}$ & $0.0440796129025518$ \\
      $h_{7}$ & $-0.0440796129025518$ \\
      $h_{8}$ & $-0.0440796129025518$ \\
      $h_{9}$ & $0.0440796129025518$ \\
      $h_{10}$ & $0.1258413655800634$ \\
      $h_{11}$ & $0.1699209784826152$ \\
      $h_{12}$ & $0.1699209784826152$ \\
      $h_{13}$ & $0.1258413655800634$ \\
      $h_{14}$ & $0.17866777775953416$ \\ \hline
    \end{tabular}
  \caption{The coefficient of the hydrogen molecule. The unit of these values is GHZ, as we describe in the main text.}
  \label{tb:hydrogen_coefficient}
\end{table}

The most promissing device for the quantum annealing is a superconducting qubit. We mainly consider the implementation of the superconducting qubits. The typical energy scale of the superconducting qubit is aroung GHz \cite{clarke2008superconducting}. So we adopt thie energy scale to describe the Hamiltonian.

The relation between the measured energy and annealing time is shown in FIG\ref{fig:optimal_annealign_time}.
We can choose the annealing time to minimize the energy of the problem Hamiltonian after the QA. Throughout of our paper, we choose such an optimized annealing time for the plots.
Importantly, as the decoherence rate increases, the minimum energy after the optimization
increases. This is because the decoherence could induce a transition from a ground stat to excited states.

\begin{figure}[ht]
\includegraphics[width=80mm]{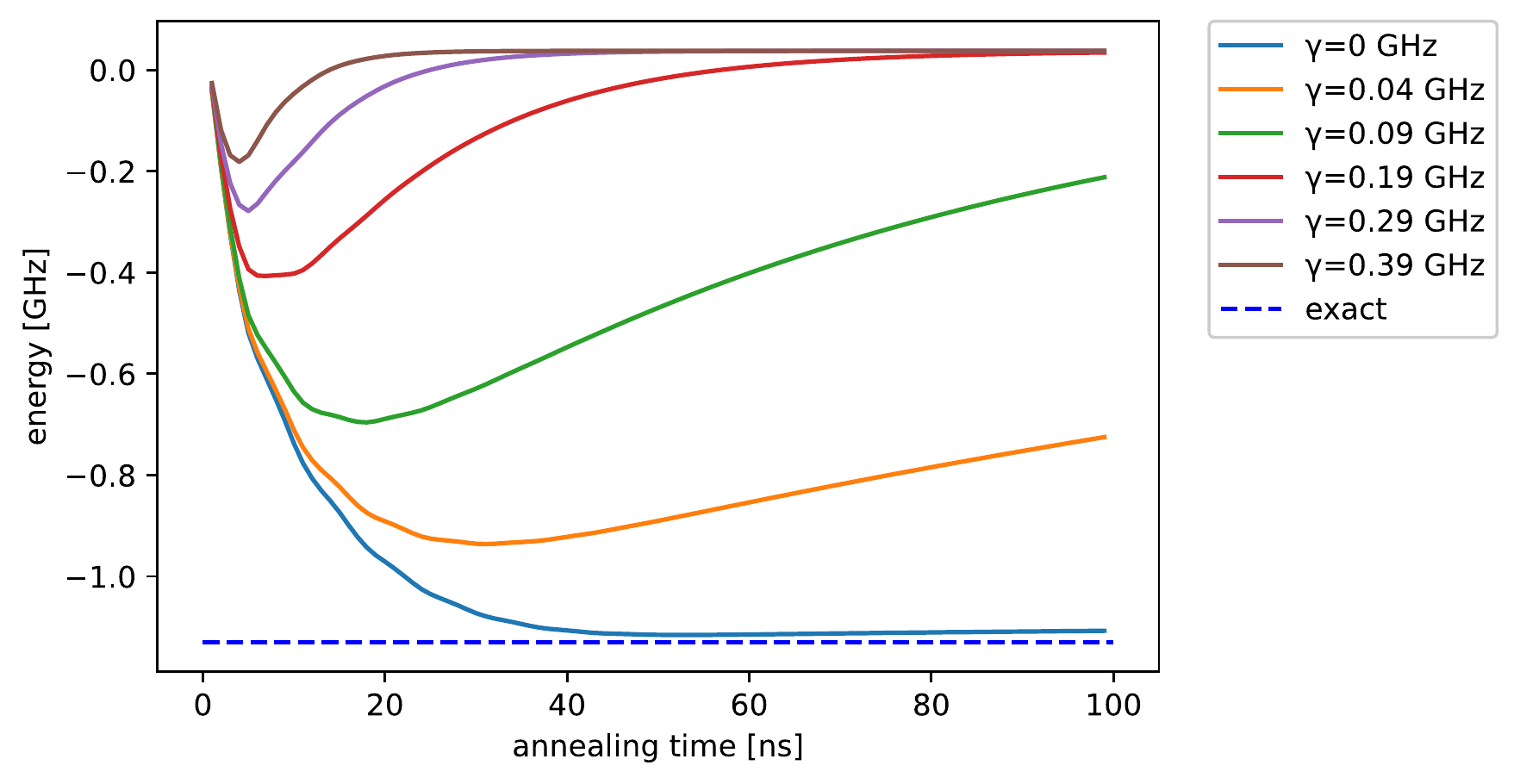}
\caption{
Plots to show the relation between the annealing time and the ground state energy by quantum annealing.
We consider the hydrogen molecule with interatomic distance $0.65$\AA.
Vertical axis denote the energy, while the horizontal axis denote annealing time $T$ for the each decoherence rate $\gamma$.
}
\label{fig:optimal_annealign_time}
\end{figure}

~~~


~~~


\begin{figure}[htbp]
\begin{center}
\begin{minipage}{0.48\textwidth}
\begin{center}
\begin{minipage}{0.90\textwidth}
\begin{center}
\includegraphics[width=0.95\textwidth]{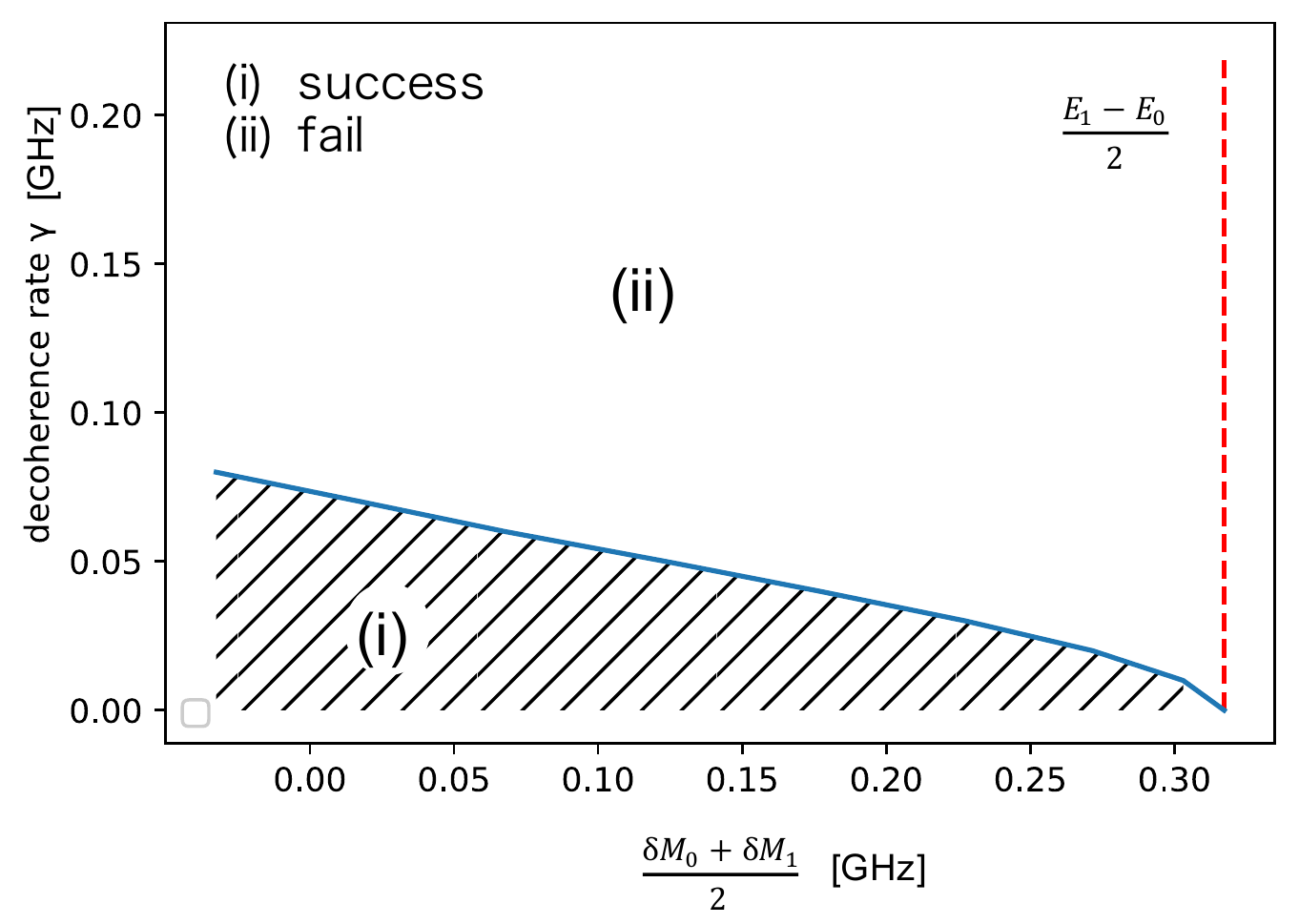}\\
(a) Threshold decoherence rate for our scheme to be applied
\end{center}
\end{minipage}\\
\begin{minipage}{0.90\textwidth}
\begin{center}
\includegraphics[width=0.95\textwidth]{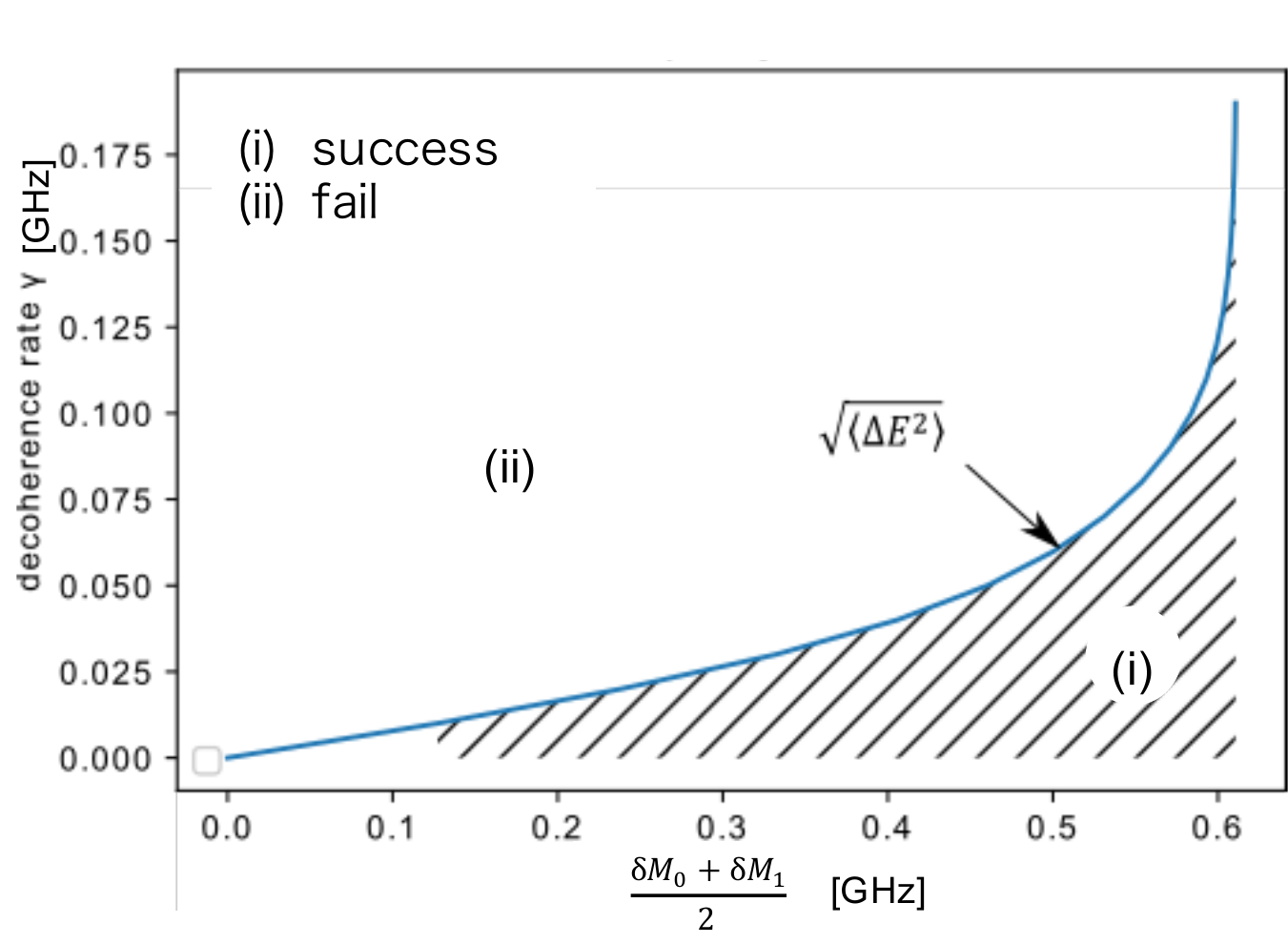}\\
(b) The threhold decoherence rate for our scheme to be more precise than pre-estimatation 
\end{center}
\end{minipage}
\end{center}
\end{minipage}
\caption{(a)The threshold decoherence rate for our scheme to be applied. The holizontal axis denotes the estimation error when we perform pre-estimation by a classical computer. As long as the decoherence rate of the QA is below the threshold, we can apply our scheme, and so the energy variance can be an upper-bound of the estimation error of the QA.
Here, we define that our scheme succeeds (fails) when we can (cannot) apply our scheme based on this prescription.
If $(\delta M_{0}+\delta M_{1})/2$ becomes equal to or larger than $(E_{1}-E_{0})/2$, our protocol always fails regardless the values of the decoherence rate.
(b)The threhold decoherence rate for our scheme to be more precise than pre-estimatation.
The holizontal axis denotes the estimation error when we perform pre-estimation by a classical computer.
As long as the decoherence rate of the QA is below the threshold, the energy variance is more precise upper-bound than pre-estimation error of the QA. Here, we define that our scheme succeeds (fails) when our scheme can (cannot) achieve better
estimation than the pre-estimation
on this prescription.%
}
\label{fig:accuracy_pre-estimate}
\end{center}
\end{figure}

By applying the method discussed in the subsection \ref{sec:check_pop}, we numerically determine the conditions that satisfies $\frac{1}{2}\geq\epsilon^{2}$ with the pre-estimation when we estimate the ground state energy of the hydrogen molecules. In other words, we show the region where we can use the energy variance as the upper bound of the estimation error. Such a region is plotted in FIG\ref{fig:accuracy_pre-estimate} (a). As the decoherence rate increases, pre-estimation should be done more precisely to satisfy $\frac{1}{2}\geq\epsilon^{2}$. On the other hand, even when we can use the energy variance obtained from the QA as the new error bound (due to the satisfaction of the condition of  $\frac{1}{2}\geq\epsilon^{2}$), the pre-estimation could be still better if the variance is too large. We plot the condition when the energy variance can be smaller than $\delta M_0$ while the condition of $\frac{1}{2}\geq\epsilon^{2}$ is satisfied, as shown FIG\ref{fig:accuracy_pre-estimate} (b).

In the FIG\ref{fig:accuracy_estimate}, we plot the estimation of the ground state energy and the error bar obtained by the energy variance when we use our scheme in the QA. As the decoherence rate becomes smaller, the error bar (corresponding to the energy variance) becomes smaller. Also, we confirm that the variance is actually larger than the estimation error when the decoherence rate is larger.

\begin{figure}[ht]
\includegraphics[width=80mm]{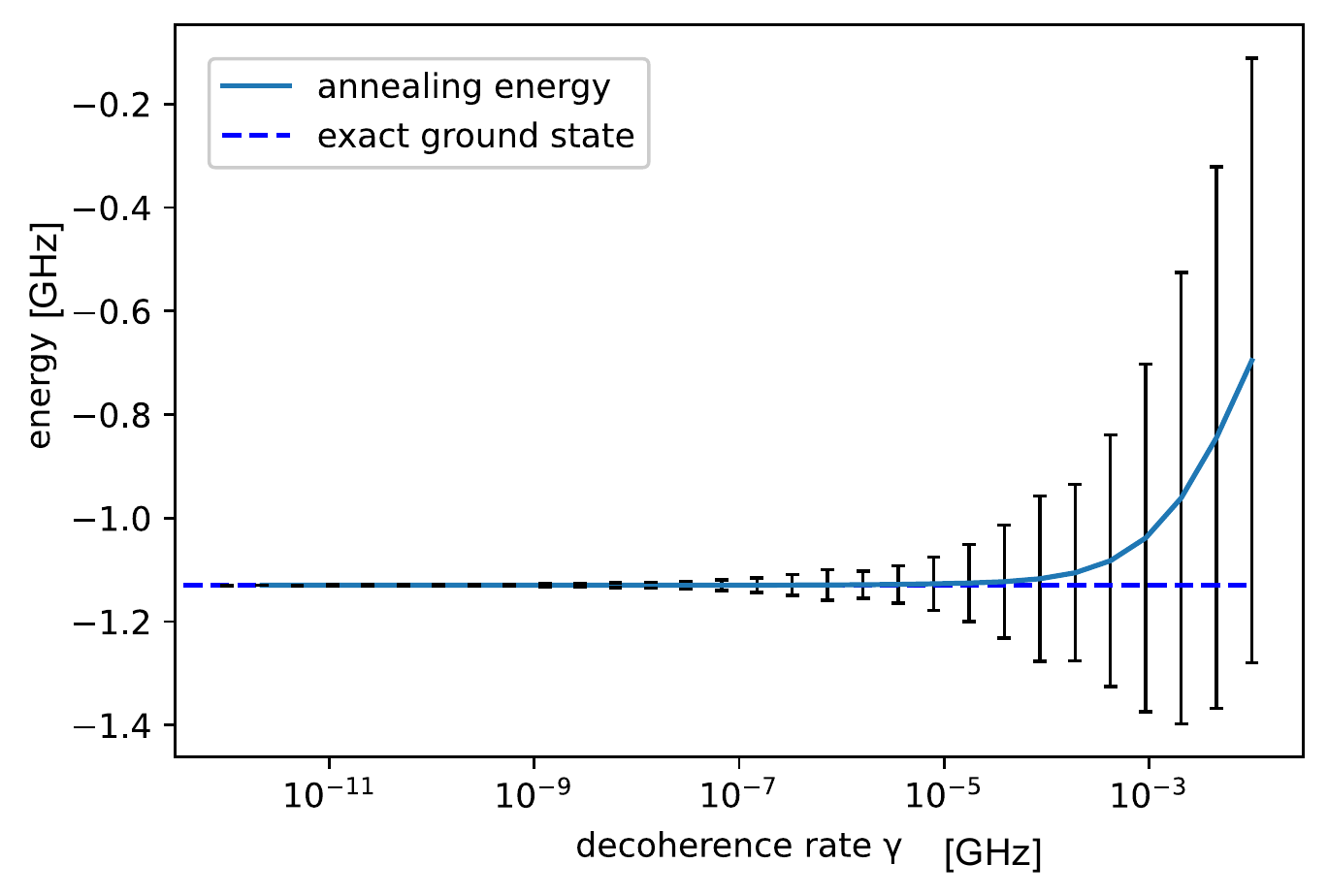}
\caption{
The energy expectation value with the error bar in our scheme.
The dashed line is the exact ground state energy.
The solid line is the energy expectation value ($E_{0}^{(ann)}$) obtained from the QA.
}
\label{fig:accuracy_estimate}
\end{figure}
\section{Conclusion}

In this paper, we propose a way to estimate an energy of the target Hamiltonian with improved accuracy by combining quantum annealing with classical computation. 
We show that,
if the population of the ground state is more than $1/2$ after the QA, the error of the energy for the problem Hamiltonian is upper bounded by the square root of the energy variance.
In order to check whether the population of the ground state is more than $1/2$ after the QA, we use classical computation for the pre-estimation of the energy of the ground state and first excited state. More precisely,
we obtain the approximate energy of the problem Hamiltonian with possible error bars for the ground state and first excited state by performing classical computation with some approximation (such as mean field technique).
From the values obtained by the pre-estimation, we can calculate a threshold, and if the energy of the state after the QA is smaller than the threshold, the population of the ground state is more than $1/2$ after the QA. In addition, if the energy variance of the QA is smaller than the error bar in the pre-estimation, we can use the energy variance as the improved error bar. Our methods are useful \textcolor{black}{to improve the accuracy}
for quantum chemistry calculations especially
when the QA with long-lived qubits is experimentally realized.

\begin{acknowledgments}
We thank a helpful discussion with Kenji Sugisaki.
This work was supported by Leading Initiative for
Excellent Young Researchers MEXT Japan and JST
presto (Grant No. JPMJPR1919) Japan. This paper is partly based on results obtained from a project,
JPNP16007, commissioned by the New Energy and Industrial Technology Development Organization (NEDO),
Japan. 
\end{acknowledgments}

~~~

\nocite{*}

\bibliography{apssamp}

\end{document}